\newtheorem{definition}{Definition}
\newtheorem{proposition}[definition]{Proposition}
\newtheorem{theorem}[definition]{Theorem}
\def\squareforqed{\hbox{\rlap{$\sqcap$}$\sqcup$}}
\def\qed{\ifmmode\squareforqed\else{\unskip\nobreak\hfil
\penalty50\hskip1em\null\nobreak\hfil\squareforqed
\parfillskip=0pt\finalhyphendemerits=0\endgraf}\fi}
\def\endenv{\ifmmode\;\else{\unskip\nobreak\hfil
\penalty50\hskip1em\null\nobreak\hfil\;
\parfillskip=0pt\finalhyphendemerits=0\endgraf}\fi}
\newenvironment{proof}{\noindent \textbf{{Proof~} }}{\qed}
\newenvironment{remark}{\noindent \textbf{{Remark~}}}{\qed}
\newenvironment{example}{\noindent \textbf{{Example~}}}{\qed}
\mathchardef\ordinarycolon\mathcode`\:
\def\vcentcolon{\mathrel{\mathop\ordinarycolon}}
\newcommand{\nc}{\newcommand}
\nc{\rnc}{\renewcommand}
\nc{\beg}{\begin{equation}}
\nc{\eeq}{{\end{equation}}}
\nc{\beqa}{\begin{eqnarray}}
\nc{\eeqa}{\end{eqnarray}}
\nc{\lbar}[1]{\overline{#1}}
\nc{\bra}[1]{\langle#1|}
\nc{\ket}[1]{|#1\rangle}
\nc{\ketbra}[2]{|#1\rangle\!\langle#2|}
\nc{\braket}[2]{\langle#1|#2\rangle}
\nc{\proj}[1]{| #1\rangle\!\langle #1 |}
\nc{\avg}[1]{\langle#1\rangle}
\nc{\Rank}{\operatorname{Rank}}
\nc{\smfrac}[2]{\mbox{$\frac{#1}{#2}$}}
\nc{\tr}{\operatorname{Tr}}
\nc{\ox}{\otimes}
\nc{\dg}{\dagger}
\nc{\dn}{\downarrow}
\nc{\cA}{{\cal A}}
\nc{\cB}{{\cal B}}
\nc{\cC}{{\cal C}}
\nc{\cD}{{\cal D}}
\nc{\cE}{{\cal E}}
\nc{\cF}{{\cal F}}
\nc{\cG}{{\cal G}}
\nc{\cH}{{\cal H}}
\nc{\cI}{{\cal I}}
\nc{\cJ}{{\cal J}}
\nc{\cK}{{\cal K}}
\nc{\cL}{{\cal L}}
\nc{\cM}{{\cal M}}
\nc{\cN}{{\cal N}}
\nc{\cO}{{\cal O}}
\nc{\cP}{{\cal P}}
\nc{\cQ}{{\cal Q}}
\nc{\cR}{{\cal R}}
\nc{\cS}{{\cal S}}
\nc{\cT}{{\cal T}}
\nc{\cX}{{\cal X}}
\nc{\cY}{{\cal Y}}
\nc{\cZ}{{\cal Z}}
\nc{\cW}{{\cal W}}
\nc{\csupp}{{\operatorname{csupp}}}
\nc{\qsupp}{{\operatorname{qsupp}}}
\nc{\var}{{\operatorname{var}}}
\nc{\rar}{\rightarrow}
\nc{\lrar}{\longrightarrow}
\nc{\polylog}{{\operatorname{polylog}}}
\nc{\wt}{{\operatorname{wt}}}
\nc{\av}[1]{{\left\langle {#1} \right\rangle}}
\nc{\supp}{{\operatorname{supp}}}
\def\a{\alpha}
\nc{\RR}{{{\mathbb R}}}
\nc{\CC}{{{\mathbb C}}}
\nc{\FF}{{{\mathbb F}}}
\nc{\NN}{{{\mathbb N}}}
\nc{\ZZ}{{{\mathbb Z}}}
\nc{\PP}{{{\mathbb P}}}
\nc{\QQ}{{{\mathbb Q}}}
\nc{\UU}{{{\mathbb U}}}
\nc{\EE}{{{\mathbb E}}}
\nc{\id}{{\operatorname{id}}}
\nc{\CHSH}{{\operatorname{CHSH}}}
\nc{\be}{\begin{equation}}
\nc{\ee}{{\end{equation}}}
\nc{\bea}{\begin{eqnarray}}
\nc{\eea}{\end{eqnarray}}
\nc{\Hom}[2]{\mbox{Hom}(\CC^{#1},\CC^{#2})}
\nc{\rU}{\mbox{U}}
\nc{\ob}[1]{#1}
\nc{\SEP}{{\text{SEP}}}
\nc{\NS}{{\text{NS}}}
\nc{\LOCC}{{\text{LOCC}}}
\nc{\PPT}{{\text{PPT}}}
\nc{\EXT}{{\text{EXT}}}
\nc{\Sym}{{\operatorname{Sym}}}
\nc{\ERLO}{{E_{\text{r,LO}}}}
\nc{\ERLOCC}{{E_{\text{r,LOCC}}}}
\nc{\ERPPT}{{E_{\text{r,PPT}}}}
\nc{\ERLOCCinfty}{{E^{\infty}_{\text{r,LOCC}}}}
\nc{\Aram}{{\operatorname{\sf A}}}
\begin{document}
\title{Nonadditivity of Rains' bound for distillable entanglement}
\author{Xin Wang$^{1}$}
\email{xin.wang-8@student.uts.edu.au}
\author{Runyao Duan$^{1,2}$}
\email{runyao.duan@uts.edu.au}

\affiliation{$^1$Centre for Quantum Software and Information, Faculty of Engineering and Information Technology, University of Technology Sydney, NSW 2007, Australia}
\affiliation{$^2$UTS-AMSS Joint Research Laboratory for Quantum Computation and Quantum Information Processing, Academy of Mathematics and Systems Science, Chinese Academy of Sciences, Beijing 100190, China}
\begin{abstract}
Rains' bound is arguably the best known upper bound of the distillable entanglement by operations completely preserving positivity of partial transpose (PPT) and was conjectured to be additive and coincide with the asymptotic relative entropy of entanglement. We disprove both conjectures by explicitly constructing a special class of mixed two-qubit states. We then introduce an additive semidefinite programming  lower bound ($E_M$) for the asymptotic Rains' bound, and it immediately becomes a computable lower bound for entanglement cost of bipartite states. Furthermore, $E_M$ is also proved to be the best known upper bound of the PPT-assisted deterministic distillable entanglement and gives the asymptotic rates for all pure states and some class of genuinely mixed states. 
\end{abstract}
\maketitle

\section{Introduction} 
Entanglement plays a crucial role in quantum physics and is the key resource in many quantum information processing tasks.  So it is quite natural and important to develop a theoretical framework to describe and quantify it. In spite of a series of remarkable recent progress in the theory of entanglement (for reviews see, e.g., Refs. \cite{Plenio2007, Eisert2006, Christandl2006, Horodecki2009a}),  many fundamental questions or challenges still remain open. One of the most significant questions is to determine the \emph{distillable entanglement} $E_D$, i.e. the highest rate at which one can obtain maximally entangled states from an entangled state by local operations and classical communication (LOCC) \cite{Bennett1996c, Rains1999a}. This fundamental measure fully captures the ability of given state shared between distant parties (Alice and Bob) to generate strongly correlated qubits in order to allow reliable quantum teleportation or quantum cryptography. However, up to now, how to calculate $E_D$ for general quantum states still remains unknown. Also, in many practical applications, the resources are finite and the number of prepared states is limited. It is also of importance to study the deterministic distillable entanglement of finite entanglement transformations.

To evaluate the distillation rates efficiently, one possible way is to find computable upper bounds. A well-known upper bound of the distillable entanglement is the relative entropy of entanglement (REE) \cite{Vedral1997, Vedral1998a, Vedral1997a}, which expresses the minimal distinguishability between the given state and all possible separable states. An improved bound is the Rains' bound \cite{Rains2001},  which is arguably the best known upper bound of distillable entanglement. The best known SDP upper bound is introduced in Ref. \cite{Wang2016} and it is an improved version of the \emph{logarithmic negativity} \cite{Vidal2002,Plenio2005b}. In Ref \cite{Plenio2005b},  the logarithmic negativity is proved to be a proper entanglement monotone for the first time.  Other known upper bounds of $E_D$ are studied in Refs. \cite{Vedral1998a, Rains1999, Horodecki2000a, Christandl2004}. Unfortunately, most of these known upper bounds are difficult to compute \cite{Huang2014} and usually easily computable only for states with high symmetries, such as Werner states, isotropic states, or the family of ``iso-Werner'' states \cite{Bennett1996c, Vollbrecht2001, Terhal2000, Rains1999}.

Entanglement cost $E_C$ \cite{Bennett1996c, Hayden2001} is another fundamental measure in entanglement theory, which quantifies the rate for converting maximally entangled states to the given state by LOCC alone.
However, computing $E_C$ is NP-hard  \cite{Huang2014} and the entanglement cost is known only for a few of quantum states \cite{Vidal2002b, Yura2003, Matsumoto2004}. Even under the PPT operations, there are only bounds for the exact entanglement cost \cite{Audenaert2003}. 

Since both distillable entanglement and the entanglement cost are important but difficult to compute, it is of great significance to find the best approach to efficiently evaluate them. As Rains' bound is proved to be equal to the asymptotic relative entropy of entanglement  (with respect to PPT states)  for Werner states \cite{Audenaert2001} and orthogonally invariant states \cite{Audenaert2002}, one open problem is to determine whether these two quantities always coincide \cite{Plenio2007}. Another significant open problem is whether Rains' bound is additive, and it was conjectured in Ref. \cite{Audenaert2002} that Rains' bound might be additive for arbitrary quantum states.

In this paper, we resolve the above two open problems about Rains' bound by explicitly exhibiting a special class of two-qubit states whose Rains' bound and relative entropy of entanglement are known. We show that the Rains' bound is not additive and thereby the asymptotic (or regularized) Rains' bound will give a better upper bound on distillable entanglement. Meanwhile, the asymptotic relative entropy of entanglement (w.r.t. PPT states) of these two-qubit states is strictly smaller than the Rains' bound. Furthermore, an SDP lower bound $E_M$ for the asymptotic Rains' bound is introduced and it is the first computable lower bound for entanglement cost of general bipartite quantum states.
Meanwhile, this bound is proved to be the best known upper bound of the deterministic distillable entanglement, which gives the PPT-assisted asymptotic rate for some states, including all the pure states and the mixed states $\rho^{(\alpha)}(0<\alpha\le0.2)$ in Ref. \cite{Wang2016}.

Before we present our main results, let us first review some notations and preliminaries. In the following we will frequently use symbols such as $A$ (or $A'$) and $B$ (or $B'$) to denote (finite-dimensional) Hilbert spaces associated with Alice and Bob, respectively. The set of linear operators over $A$ is denoted by $\cL(A)$. Note that for a linear operator $R$ over a Hilbert space, we define $|R|=\sqrt{R^\dagger R}$, and the trace norm of $R$ is given by $\|R\|_1=\tr |R|$, where $R^\dagger$ is the conjugate transpose of $R$. The operator norm $\|R\|_\infty$ is defined as the maximum eigenvalue of $|R|$. A positive semidefinite operator $E_{AB} \in \cL(A\ox B)$ is
said to be PPT if $E_{AB}^{T_{B}}\geq 0$, i.e.,
$(\ketbra{i_Aj_B}{k_Al_B})^{T_{B}}=\ketbra{i_Al_B}{k_Aj_B}$.

The concise definition of entanglement of distillation by LOCC is given in Ref. \cite{Plenio2007} as follows:
$$E_D(\rho_{AB})=\sup\{r: \lim_{n \to \infty} [\inf_\Lambda  \|\Lambda(\rho_{AB}^{\ox n})- \Phi(2^{rn})\|_1]=0\},$$
where $\Lambda$ ranges over LOCC operations and $\Phi(d)=1/d\sum_{i,j=1}^d\ketbra{ii}{jj}$ represents the standard $d\otimes d$ maximally entangled state.  When $\Lambda$ ranges over PPT operations, the PPT-assisted distillable entanglement is defined by $E_{D,PPT}$.

The Rains' bound was introduced in Ref. \cite{Rains2001} and refined in Ref. \cite{Audenaert2002}  as a convex optimization problem  as follows: 
\begin{equation}\label{rains}
R(\rho)=\min S(\rho||\tau) \text{ s.t. } \ \tau\ge0, \tr |\tau^{T_B}|\le 1.
\end{equation}
In  this formula, $S(\rho||\sigma)=\tr (\rho\log \rho-\rho\log\sigma)$ denotes
the relative Von Neumann entropy. Rains' bound is important in entanglement theory and the generalized Rains information of a quantum channel is recently proved to be a strong converse rate for quantum communication \cite{Tomamichel2015a}.

The relative entropy of entanglement (REE)  \cite{Vedral1997, Vedral1998a, Vedral1997a} with respect to the PPT states is given by the following convex optimization problem:
\begin{equation}
E_{R,PPT}(\rho)= \min S(\rho || \sigma) \ \text{ s.t. }\  \sigma, \sigma^{T_B}\ge 0,\tr(\sigma)=1.
\end{equation}
For a general bipartite state $\rho$,  it holds that $E_{R,PPT}(\rho)\ge R(\rho)$. However, $E_{R,PPT}(\rho)$ equals to $R(\rho)$ for every two-qubit state $\rho$ \cite{Miranowicz2008} or the bipartite state with one qubit subsystem \cite{Girard2014}.  In particular, a two-qubit full-rank state $\sigma$ is the closest seperable state of any state $\rho$ in the following form \cite{Miranowicz2008, Friedland2011}:
\begin{equation}\label{CSS}
\rho=\sigma-xG(\sigma),
\end{equation} 
and 
\begin{equation}
G(\sigma)=\sum_{i,j}G_{i,j}\proj{v_i}(\proj \phi)^{T_B}\proj {v_j},
\end{equation}
with $\text{span}(\ket \phi)$ is the kernel (or null space) of $\sigma^{T_B}$ and
$G_{i,j}=\lambda_i$ when $\lambda_i=\lambda_j$ and $G_{i,j}=(\lambda_i-\lambda_j)/(\ln \lambda_i-\ln \lambda_j)$ when $\lambda_i\ne\lambda_j$, where $\lambda_i$ and $\ket {v_i}$ are the eigenvalues and eigenvectors of $\sigma$, respectively.

The asymptotic relative entropy of entanglement is given by
\begin{equation}
E_{R,PPT}^{\infty}(\rho)=\inf_{n\ge 1}  \frac{1}{n}E_{R,PPT}(\rho^{\ox n}).
\end{equation}

The numerical estimation of relative entropy of entanglement with respect to the PPT states is introduced in Refs. \cite{Zinchenko2010, Girard2015}, i,e, 
can be estimated by a Matlab program. Suppose that the estimation of $E_{R,PPT}(\rho)$ by  in Refs. \cite{Zinchenko2010, Girard2015} is $E_R^{+}(\rho)$, and the inequality $E_R^{+}(\rho)=S(\rho||\sigma)\ge E_{R,PPT}(\rho)$ holds 
since the algorithm indeed provides a feasible PPT state $\sigma$ which is almost optimal. This algorithm is implemented by CVX \cite{Grant2008} (a Matlab software for disciplined convex programming) and QETLAB \cite{NathanielJohnston2016}.
In low dimensions, this algorithm  provides an estimation  $E^+_R(\rho)$ with an absolute error smaller than $10^{-3}$, i.e.  $E_{R,PPT}(\rho)+10^{-3}\ge E^+_R(\rho)\ge E_{R,PPT}(\rho)$. 

The SDP upper bound on distillable entanglement $E_W(\rho)=\log W(\rho)$ for a bipartite state $\rho$ is introduced in Ref. \cite{Wang2016}, i.e.,
\begin{equation}\label{prime WN}
\begin{split}
W(\rho)= \max \ \tr  \rho R_{AB}, \  |R_{AB}^{T_{B}}| \le  \1, R_{AB}\ge0. 
\end{split}\end{equation}
Semidefinite programming (SDP) \cite{Vandenberghe1996} is a powerful tool in quantum information theory with many applications (e.g., \cite{Piani2015,Jain2011a,Wang2016a,Skrzypczyk2014,Wang2016g,Berta2015b,Kogias2015a,Wang2016f,Doherty2002a,Li2017}), which can be implemented by CVX \cite{Grant2008} and QETLAB \cite{NathanielJohnston2016}.

\section{Main Results}
\subsection{Nonadditivity of Rains' Bound} 
We first introduce a class of two-qubit states $\rho_r$ whose closest separable states can be derived by the result in Ref. \cite{Miranowicz2008}. Thus, the Rains' bound of $\rho_r$ is exactly given. Then we apply the algorithm in Refs. \cite{Zinchenko2010, Girard2015} to demonstrate the gap between $R(\rho^{\ox 2})$ and $R(\rho)$.
\begin{theorem}
There exists a two-qubit state $\rho$ such that $$R(\rho^{ \ox 2})<2R(\rho).$$
Meanwhile,
$$E^{\infty}_R(\rho)< R(\rho).$$
\end{theorem}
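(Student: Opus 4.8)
The plan is to exploit the closest-separable-state relation of Eq.~\eqref{CSS} in reverse: rather than starting from $\rho$ and searching for its closest PPT state, I would start from a conveniently chosen full-rank two-qubit separable state $\sigma$ and manufacture a family $\rho_r$ whose closest separable state is \emph{exactly} $\sigma$. Concretely, I fix the eigendata $\{\lambda_i,\ket{v_i}\}$ of $\sigma$, identify the kernel of $\sigma^{T_B}$ to pin down $\ket\phi$, assemble $G(\sigma)$ from the stated formula for $G_{i,j}$, and set $\rho_r=\sigma-xG(\sigma)$ for a parameter $x$ (together with a family parameter $r$ entering $\sigma$). By the results of Refs.~\cite{Miranowicz2008,Friedland2011}, $\sigma$ is then the closest separable state of $\rho_r$, so that $E_{R,PPT}(\rho_r)=S(\rho_r\|\sigma)$; and because $\rho_r$ is a two-qubit state, the equality $R(\rho_r)=E_{R,PPT}(\rho_r)$ of Ref.~\cite{Miranowicz2008} yields $R(\rho_r)=S(\rho_r\|\sigma)$ in closed form. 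The only routine checks at this stage are that $x$ and $r$ can be chosen so that $\rho_r\ge0$ and $\tr\rho_r=1$.

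The second step is to bound $R(\rho_r^{\ox 2})$ from above. Since $\rho_r^{\ox 2}$ is a $4\ox 4$ state, the two-qubit identity no longer applies, but the one-sided inequality $R(\rho_r^{\ox 2})\le E_{R,PPT}(\rho_r^{\ox 2})$ still holds and points in exactly the direction I need. I would run the estimation algorithm of Refs.~\cite{Zinchenko2010,Girard2015} on $\rho_r^{\ox 2}$ to produce a genuinely feasible PPT state $\sigma'$ and the value $E_R^+(\rho_r^{\ox 2})=S(\rho_r^{\ox 2}\|\sigma')\ge E_{R,PPT}(\rho_r^{\ox 2})\ge R(\rho_r^{\ox 2})$. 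Everything then reduces to certifying a single numerical inequality for some member of the family:
\begin{equation}
E_R^+(\rho_r^{\ox 2})<2R(\rho_r).
\end{equation}

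Granting this one inequality, both assertions follow immediately. For nonadditivity, $R(\rho_r^{\ox 2})\le E_R^+(\rho_r^{\ox 2})<2R(\rho_r)$. For the asymptotic separation, taking $n=2$ in the infimum defining the regularization gives
\begin{equation}
E_{R,PPT}^{\infty}(\rho_r)\le\tfrac12 E_{R,PPT}(\rho_r^{\ox 2})\le\tfrac12 E_R^+(\rho_r^{\ox 2})<R(\rho_r),
\end{equation}
which is precisely $E^{\infty}_R(\rho_r)<R(\rho_r)$.

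The main obstacle is quantitative rather than conceptual: the gap must be engineered to exceed the algorithm's $10^{-3}$ tolerance, so that the strict inequality is not a numerical artifact. Because $R(\rho_r)$ is known exactly while $R(\rho_r^{\ox 2})$ is only controlled through the certified feasible $\sigma'$, I need the reported $E_R^+(\rho_r^{\ox 2})$ to sit comfortably below $2R(\rho_r)$; a scan over the family parameters $r$ (and the internal $x$) to locate and maximize this margin is the practical crux. The delicate conceptual point worth stressing is that the coincidence $R=E_{R,PPT}$ is available only at the two-qubit level, so it is used solely to evaluate $R(\rho_r)$ exactly, whereas for the tensor power only the favorable inequality $R\le E_{R,PPT}$ is invoked.
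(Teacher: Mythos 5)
Your proposal follows essentially the same route as the paper: reverse-engineer a two-qubit family $\rho_r=\sigma_r-xG(\sigma_r)$ so that $R(\rho_r)=E_{R,PPT}(\rho_r)=S(\rho_r\|\sigma_r)$ is known exactly, then upper-bound $R(\rho_r^{\ox 2})$ via the certified feasible PPT state produced by the algorithm of Refs.~\cite{Zinchenko2010,Girard2015} and check that the gap exceeds the numerical tolerance. The paper completes exactly this plan with an explicit $\sigma_r$, an explicit $x$ (giving $\rho_r=\sigma_r-\tfrac{3}{2}G(\sigma_r)$), and the choice $r_0=0.547$, for which $2R(\rho_{r_0})-E_R^+(\rho_{r_0}^{\ox 2})\approx 1.007\times 10^{-2}$, comfortably above the $10^{-3}$ accuracy of the estimation.
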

\begin{proof}
Firstly, we construct two-qubit states $\rho_r$ and $\sigma_r$ satisfying Eq. (\ref{CSS}). Then we have $R(\rho_r)=S(\rho_r || \sigma_r)$.
Suppose that
\begin{align*}
\sigma_r&=\frac{1}{4}\proj{00}+\frac{1}{8}\proj{11}+r\proj{01}\\
&+(\frac{5}{8}-r)\proj{10}+\frac{1}{4\sqrt2}(\ketbra{01}{10}+\ketbra{10}{01}).
\end{align*}
The positivity of $\sigma_r$ requires that  $\frac{5-\sqrt 17}{16}\le r\le \frac{5+\sqrt 17}{16}$. Assume that $r\ge 5/8-r$ and we can further choose $0.3125\le r \le 0.57$ for simplicity.

Meanwhile, let us choose
\begin{align*}
\rho_{r}
=&\frac{1}{8}\proj{00}+x\proj{01}+\frac{7-8x}{8}\proj{10}\\
&+\frac{32r^2-(6+32x)r+10x+1}{4\sqrt 2}(\ketbra{01}{10}+\ketbra{10}{01})
\end{align*}
with 
$$x=r+\frac{32r^2-10r+1}{256r^2-160r+33}+\frac{(16r-5)y^{-1}}{32\ln{(5/8-y)}-32\ln{ ({5}/8+y)}}$$
and $y=(4r^2-{5r}/{2}+{33}/{64})^{1/2}$. It is clear that $\tr \rho_r=1$ and we set $0.3125\le r \le 0.5480$ to ensure the positivity of $\rho_r$.

One can readily verify that $\rho_{r}=\sigma_r-{3}G(\sigma_r)/2$.
Therefore, $\sigma_r$ is the closest separable state (CSS) for $\rho_r$ and we have that
\begin{equation}
R(\rho_r)=E_{R,PPT}(\rho_r)=S(\rho_r || \sigma_r).
\end{equation}

In particular, let us first choose $r_0=0.547$, the Rains' bound of $\rho_{r_0}$ is given by
 $$R(\rho_{r_0})=E_{R,PPT}(\rho_{r_0})=S(\rho_{r_0} || \sigma_{r_0})\simeq 0.3891999.$$
Furthermore, applying the algorithm in Refs. \cite{Zinchenko2010, Girard2015},
we can find a PPT state $\sigma_0$ such that  $$ E_R^+(\rho_{r_0} ^{ \ox 2} )= S(\rho_{r_0} ^{ \ox 2}  || \sigma_0)\simeq 0.7683307.$$
The numerical value of relative entropy here is calculated based on the Matlab function ``logm''  \cite{logm} and
the function ``Entropy'' in QETLAB \cite{NathanielJohnston2016}. In this case, the accuracy is guaranteed by the fact  $\|e^{\text{logm}(\sigma_{r_0})}-\sigma_{r_0}\|_1\le 10^{-16}$ and $\|e^{\text{logm}(\sigma_{0})}-\sigma_0\|_1\le 10^{-14}$.
Noting that the difference between $2R(\rho_{r_0})$ and $ E_R^+(\rho_{r_0} ^{ \ox 2} )$ is already $1.00691\times 10^{-2}$,
we have that
\begin{align*}
R(\rho_{r_0}^{ \ox 2} ) \le E_{R,PPT}(\rho_{r_0}^{ \ox 2} )\le E_R^+(\rho_{r_0} ^{ \ox 2} )< 2R(\rho_{r_0}).
\end{align*}
It is also easy to observe that
$$E^{\infty}_{R,PPT}(\rho_{r_0}) \le \frac{1}{2}E_{R,PPT}(\rho_{r_0}^{ \ox 2} )<R(\rho_{r_0}).$$

When $0.45\le r\le 0.548$, we show the gap between $2R(\rho_r)$ and $E_R^+(\rho_r^{ \ox 2} )$ in  FIG. 1.  
\end{proof}
\begin{figure}[htbp]
  \centering
\includegraphics[width=0.42\textwidth]{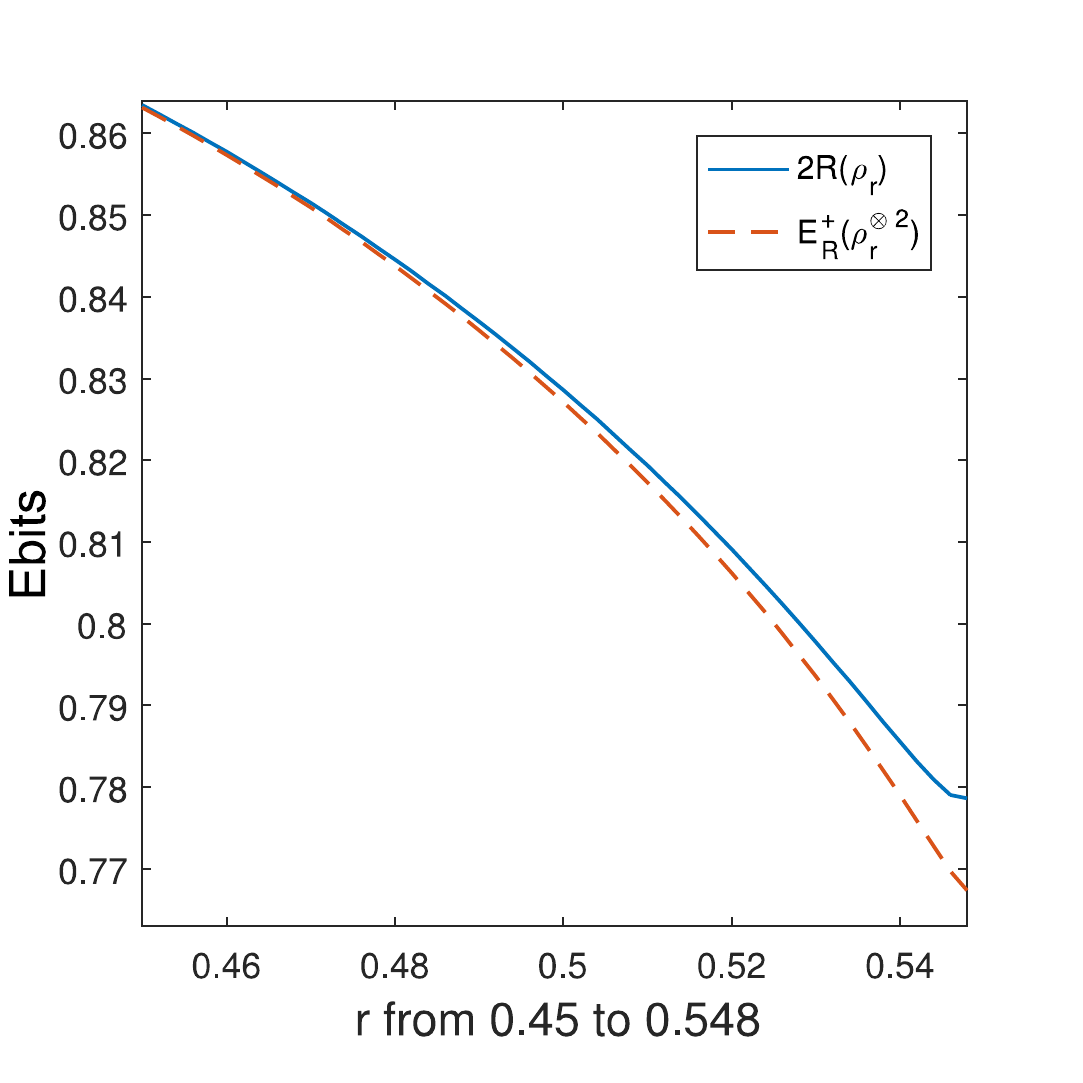}
\caption{This plot demonstrates the difference  between $2R(\rho_r)$ and $E_R^+(\rho_r^{ \ox 2} )$ for $0.45\le r \le 0.548$. The dashed line depicts  $E_R^+(\rho_r^{ \ox 2} )$ while the solid line depicts  $2R(\rho_r)$.}
\end{figure}

Since Rains' bound is not additive,  the asymptotic Rains' bound \cite{Hayashi2017b} can provide better upper bound on the distillable entanglement, i.e.,
\begin{equation}
E_{D,PPT}(\rho) \le R^{\infty}(\rho)=\inf_{n \ge 1}  \frac{1}{n}R(\rho^{\ox n})\le R(\rho),
\end{equation}
and the last inequality can be strict.

\subsection{A SDP lower bound for entanglement cost}
Since computing the entanglement cost of a bipartite state is very difficult, we introduce an efficiently computable lower bound to evaluate the  entanglement cost.  

For a bipartite quantum state $\rho$, we introduce
\begin{equation}\label{prime 2 M}\begin{split}
E_M(\rho)=-\log M(\rho)=& -\log\max \tr P_{AB}V_{AB}, \\
&\text{ s.t. } \tr |V_{AB}^{T_B}|=1, V_{AB}\ge 0, 
\end{split}\end{equation}
where $P_{AB}$ is the projection onto the support of $\rho$.
And $M(\rho)$ is also given by the following SDP:
\begin{equation}\label{prime 1 M}\begin{split}
 M(\rho)= \max& \tr P_{AB} Z_{AB}, \\
\text{ s.t. } &\tr (X_{AB}+Y_{AB})=1, \\
&Z_{AB}\le (X_{AB}-Y_{AB})^{T_B}\\
&X_{AB},Y_{AB},Z_{AB}\ge 0, 
\end{split}\end{equation}
And its dual SDP is given by
\begin{equation}\label{dual 2 M}
M(\rho)= \min   \|R_{AB}^{T_B}\|_\infty, \text{ s.t. }   R_{AB}\ge P_{AB}.
\end{equation}
The optimal values of the primal and the dual SDPs above coincide by strong duality, which can be proved by Slater’s theorem.

For any two bipartite states $\rho_{AB} $ and $\sigma_{A'B'}$, by utilizing semidefinite programming duality,
it is not difficult to prove that $$E_M(\rho_{AB} \ox\sigma_{A'B'})=E_M(\rho_{AB} )+E_M(\sigma_{A'B'}).$$

Furthermore, for any state bipartite $\rho$,
$E_M(\rho)=0$ if and only if $\supp(\rho)$ contains the support of a PPT state $\sigma$, i.e. $\supp(\sigma)\subseteq\supp(\rho) $. Too see this,
if there exists PPT state $\sigma$ such that $\supp(\sigma)\subseteq\supp(\rho) $, then $E_M(\rho)=0$. On the other hand, if any state $\sigma$ satisfies $\supp(\sigma)\subseteq\supp(\rho) $ is NPPT. Let the optimal solution to SDP (\ref{prime 2 M}) be $V$, where $V\ge0$ and $\tr|V^{T_B}|=1$. It is clear that $\tr V\le 1$. Thus, we have $\tr V=1$ when $E_M(\rho)=0$.  Hence, $V$ is a PPT state and $\supp(V)\subseteq\supp(\rho) $. This leads to a contradiction.

\begin{theorem}\label{rains lower bound}
For any bipartite state $\rho$,
$$E_M(\rho)\le R^{\infty}(\rho)\le E_C(\rho).$$
\end{theorem}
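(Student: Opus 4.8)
The inequality to establish is the chain $E_M(\rho)\le R^{\infty}(\rho)\le E_C(\rho)$. The right-hand inequality $R^{\infty}(\rho)\le E_C(\rho)$ I expect to be essentially known: the asymptotic Rains' bound is an upper bound on PPT-distillable entanglement, and since $E_C\ge E_{D,PPT}\ge R^{\infty}$ follows from the fact that entanglement cost dominates any reasonable distillation rate (and PPT operations include LOCC), this piece should reduce to citing the monotonicity and normalization properties already discussed. So the real work is the left-hand inequality $E_M(\rho)\le R^{\infty}(\rho)$.

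The plan is to prove the single-copy bound $E_M(\rho)\le R(\rho)$ and then leverage the exact additivity of $E_M$ that was just established in the excerpt, namely $E_M(\rho^{\ox n})=nE_M(\rho)$. Given single-copy dominance, I would apply it to $\rho^{\ox n}$ to get $E_M(\rho^{\ox n})\le R(\rho^{\ox n})$; dividing by $n$ and using additivity of $E_M$ yields $E_M(\rho)=\frac{1}{n}E_M(\rho^{\ox n})\le \frac{1}{n}R(\rho^{\ox n})$, and taking the infimum over $n$ gives $E_M(\rho)\le R^{\infty}(\rho)$. This reduces everything to the single-letter comparison.

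For the single-copy inequality $E_M(\rho)\le R(\rho)$, I would work from the dual SDP formulation $M(\rho)=\min \|R_{AB}^{T_B}\|_\infty$ subject to $R_{AB}\ge P_{AB}$, where $P_{AB}$ projects onto $\supp(\rho)$. The idea is to take the Rains-optimal operator $\tau\ge 0$ with $\tr|\tau^{T_B}|\le 1$ achieving $R(\rho)=S(\rho\|\tau)$ and extract from it a feasible point for the dual of $M(\rho)$. The natural candidate is to build $R_{AB}$ proportional to $\tau^{-1}$ restricted to the support of $\rho$, so that $\log\|R_{AB}^{T_B}\|_\infty$ relates to $-\log$ of the relevant spectral quantity of $\tau$. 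Concretely, I would try to show that a relative-entropy lower bound of the form $S(\rho\|\tau)\ge -\log\tr(P_{AB}\,\tau)$-type inequality, combined with the constraint $\tr|\tau^{T_B}|\le 1$, forces $M(\rho)\ge \tr(P_{AB}V)$ for a $V$ constructed from $\tau$, hence $E_M(\rho)=-\log M(\rho)\le S(\rho\|\tau)=R(\rho)$.

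The hard part will be the single-copy step: translating the logarithmic/entropic quantity $S(\rho\|\tau)$ into the linear SDP quantity $-\log\tr(P_{AB}V)$. The obstacle is that $R(\rho)$ is defined via relative entropy (a nonlinear function of $\tau$) whereas $E_M$ is defined via a linear semidefinite program over the support projection $P_{AB}$, so a direct substitution does not obviously work. I anticipate needing a data-processing or operator-monotone inequality — likely that $S(\rho\|\tau)\ge -\log\tr(\Pi_\rho\,\tau)$ where $\Pi_\rho$ is a projection related to $\supp(\rho)$ (a Pinsker/Fannes-type or measured-relative-entropy argument), which links the entropic optimum to a trace that feeds directly into the $M(\rho)$ primal $\max\tr P_{AB}Z_{AB}$. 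Identifying the right projection and verifying feasibility of the constructed $V_{AB}$ (that $V_{AB}\ge 0$ and $\tr|V_{AB}^{T_B}|\le 1$) is where the care is required.
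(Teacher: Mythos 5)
Your treatment of the left-hand inequality $E_M(\rho)\le R^{\infty}(\rho)$ is essentially the paper's own proof. The reduction is identical: establish the single-copy bound $E_M(\rho)\le R(\rho)$, then use additivity to write $E_M(\rho)=\frac{1}{n}E_M(\rho^{\ox n})\le\frac{1}{n}R(\rho^{\ox n})$ and take the infimum over $n$. Moreover, the inequality you ``anticipate needing,'' $S(\rho\|\tau)\ge-\log\tr(P\tau)$ with $P$ the projection onto $\supp(\rho)$, is exactly the paper's key step, and it is a short consequence of data processing (it is the standard statement that the min-relative entropy lower-bounds the relative entropy): with the pinching $\cN(\cdot)=P(\cdot)P+(\1-P)(\cdot)(\1-P)$ one gets
\begin{equation*}
S(\rho\|\tau)\ge S(\cN(\rho)\|\cN(\tau))=S(\rho\|P\tau P)\ge-\log\tr(P\tau),
\end{equation*}
where the last step uses $S(\rho\|P\tau P)=S\bigl(\rho\,\|\,P\tau P/\tr(P\tau)\bigr)-\log\tr(P\tau)$ and Klein's inequality for the normalized state $P\tau P/\tr(P\tau)$. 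Feasibility is then handled on the primal side, as you eventually suggest: $V=\tau/\tr|\tau^{T_B}|$ satisfies $V\ge0$, $\tr|V^{T_B}|=1$, and $\tr(PV)\ge\tr(P\tau)$ since $\tr|\tau^{T_B}|\le1$; hence $M(\rho)\ge\tr(P\tau)$ and $E_M(\rho)\le-\log\tr(P\tau)\le S(\rho\|\tau)=R(\rho)$. Your initial detour through the dual SDP with an operator ``proportional to $\tau^{-1}$'' is unnecessary and would be hard to make rigorous; the primal route you end up sketching is the right one and is what the paper does.

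The genuine gap is your justification of the right-hand inequality $R^{\infty}(\rho)\le E_C(\rho)$. You invoke the chain $E_C\ge E_{D,PPT}\ge R^{\infty}$, but the second inequality is backwards: Rains' bound and its regularization are \emph{upper} bounds on PPT-distillable entanglement, i.e.\ $E_{D,PPT}\le R^{\infty}$ --- as you yourself state in the same sentence. From $E_{D,PPT}\le E_C$ and $E_{D,PPT}\le R^{\infty}$ nothing follows about the relative order of $R^{\infty}$ and $E_C$, so no distillation-based argument of this shape can work. The paper instead goes through the relative entropy of entanglement: every PPT state is a feasible $\tau$ in the Rains optimization, so $R(\cdot)\le E_{R,PPT}(\cdot)$ pointwise, hence $R^{\infty}(\rho)\le E^{\infty}_{R,PPT}(\rho)$, and it then cites the known result $E^{\infty}_{R,PPT}(\rho)\le E_C(\rho)$ \cite{Hayashi2017b} (regularized relative entropy of entanglement lower-bounds entanglement cost, e.g.\ via $E_{R,PPT}\le E_F$ and $E_F^{\infty}=E_C$). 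That is the fact you actually need to cite; the piece you dismissed as routine is where your argument fails.
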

\begin{proof}
Suppose that the optimal solution to Eq. (\ref{rains}) of $R(\rho)$ is $V$, then $V\ge0$ and $\tr|V^{T_B}|=t\le1$. Thus, $V/t$ is a feasible solution to SDP (\ref{prime 2 M}) of $M(\rho)$, this means that $E_M\le -\log \tr PV/t =-\log \tr PV+\log t$, where $P$ is the projection onto $\supp(\rho)$.

On the other hand, let $\cN(\sigma)=P\sigma P+(\1-P)\sigma(\1-P)$, 
then by the monotonicity of quantum relative entropy, 
\begin{equation}\begin{split}
S(\rho||V)&\ge S(\cN(\rho)||\cN(V))=S(\rho||PVP)\\
                &=S(\rho||\frac{PVP}{\tr PVP})-\log\tr PV\\
                &\ge -\log\tr PV \ge E_M(\rho).
\end{split}\end{equation}
Noting that $E_M(\cdot)$ is additive, we have that 
$$E_M(\rho)\le \inf_{n \ge 1} \frac{1}{n} R(\rho^{\ox n})=R^{\infty}(\rho).$$

Finally, it is clear that
$$E_M(\rho)  \le  R^{\infty}(\rho) \le E_{R,PPT}^{\infty}(\rho) \le   E_{C}(\rho),$$
where the last inequality is from Ref. \cite{Hayashi2017b}.
\end{proof}
\begin{remark}
As an application of this lower bound,  one can also give an SDP lower bound for \emph{the entanglement cost of quantum channels} \cite{Berta2013}, i.e. the rate of entanglement (ebits) needed to asymptotically simulate a quantum channel $\cN$ with free classical communication.  
\end{remark}

\subsection{Deterministic distillable entanglement}
In this section, we show that $E_M$ is the best upper bound on the deterministic distillable entanglement of bipartite states. The bipartite pure state case is completely solved in Refs. \cite{Matthews2008, Duan2005b}. For a general state, the PPT-assisted deterministic distillation rates depend only on the support of this state \cite{Wang2016}. Note that the support $\supp(\rho)$ of a state $\rho$ is defined to be the space spanned by the eigenvectors with non-zero eigenvalues of $\rho$. 
The exact value of the one-copy PPT-assisted deterministic distillation rate of a given bipartite state $\rho$ is $E_{0,D,PPT}^{(1)}(\rho)= -\log W_0(\rho)$ \cite{Wang2016}, where ${W_0}(\rho)$ is given by
\begin{equation}\label{prime hat W0}
{W_0}(\rho)=  \min  \|R^{T_B}\|_\infty, 
\text{ s.t. }   P_{AB}\le R\le \1_{AB}.
\end{equation}
Here, $P_{AB}$ is the projection onto $\supp(\rho)$.

\begin{theorem}\label{estimation}
For any bipartite state $\rho$,
$$E_{0,D,PPT}(\rho) \le E_M(\rho) \le E_W(\rho).$$
\end{theorem}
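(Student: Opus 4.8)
The plan is to prove the two inequalities separately, exploiting the fact that $M$ and the one-copy quantity $W_0$ differ only by a single constraint, and that $E_M$ is additive while $E_{0,D,PPT}^{(1)}$ is not known to be.

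For the left inequality, I would first compare the dual SDP \eqref{dual 2 M} for $M(\rho)$, namely $M(\rho)=\min\|R^{T_B}\|_\infty$ subject to $R\ge P_{AB}$, with the expression \eqref{prime hat W0} for $W_0(\rho)$, which carries the extra constraint $R\le\1_{AB}$. Since dropping a constraint enlarges the feasible region of a minimization, $M(\rho)\le W_0(\rho)$, and hence at the one-copy level $E_M(\rho)\ge -\log W_0(\rho)=E_{0,D,PPT}^{(1)}(\rho)$. To upgrade this to the asymptotic statement I would invoke the additivity of $E_M$ established above: applying the one-copy bound to $\rho^{\ox n}$ and using $E_M(\rho^{\ox n})=nE_M(\rho)$ gives $\tfrac1n E_{0,D,PPT}^{(1)}(\rho^{\ox n})\le E_M(\rho)$ for every $n$, so taking the infimum over $n$ in the regularized definition $E_{0,D,PPT}(\rho)=\inf_n \tfrac1n E_{0,D,PPT}^{(1)}(\rho^{\ox n})$ yields $E_{0,D,PPT}(\rho)\le E_M(\rho)$.

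For the right inequality it suffices to show $M(\rho)W(\rho)\ge1$, since then $-\log M(\rho)\le\log W(\rho)$. Let $R^*$ be optimal for the dual SDP \eqref{dual 2 M} of $M(\rho)$, so that $R^*\ge P_{AB}$ and $\|(R^*)^{T_B}\|_\infty=M(\rho)$. I would then set $R=R^*/M(\rho)$ and check it is feasible for the SDP \eqref{prime WN} of $W(\rho)$: indeed $R\ge0$ since $R^*\ge P_{AB}\ge0$, and $\|R^{T_B}\|_\infty=1$, which for the Hermitian operator $R^{T_B}$ is exactly $|R^{T_B}|\le\1$. Finally, using $\rho P_{AB}=\rho$ together with $R^*\ge P_{AB}$ gives $\tr\rho R^*\ge\tr\rho P_{AB}=1$, whence $W(\rho)\ge\tr\rho R=\tr\rho R^*/M(\rho)\ge1/M(\rho)$, i.e. $M(\rho)W(\rho)\ge1$.

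The SDP feasibility checks and the invocation of the already-proven additivity of $E_M$ are routine; the step requiring the most care is the passage from the one-copy bound to the asymptotic one, where additivity of $E_M$ is essential precisely because $W_0$ (hence $E_{0,D,PPT}^{(1)}$) is not known to be additive, so one cannot simply regularize $W_0$ directly. The rescaling $R=R^*/M(\rho)$ for the upper bound is the other point to get right, since one must simultaneously secure feasibility ($\|R^{T_B}\|_\infty\le1$) and the objective lower bound ($\tr\rho R\ge1/M(\rho)$) from the single operator inequality $R^*\ge P_{AB}$.
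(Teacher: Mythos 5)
Your proposal is correct and follows essentially the same route as the paper: the left inequality is obtained by observing that every feasible point of the $W_0$ SDP (\ref{prime hat W0}) is feasible for the dual SDP (\ref{dual 2 M}) of $M$ (i.e.\ dropping the constraint $R\le\1$), then using additivity $M(\rho^{\ox n})=M(\rho)^n$ to pass to the regularized quantity, and the right inequality is obtained by rescaling the optimal dual solution to $R/\|R^{T_B}\|_\infty$ and checking it is feasible for the SDP (\ref{prime WN}) of $W$ with $\tr\rho R\ge\tr\rho P_{AB}=1$. The only cosmetic difference is that you phrase the regularization as an infimum over $n$ while the paper writes it as a limit, which does not affect the argument.
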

\begin{proof}
To prove $E_{0,D,PPT}(\rho) \le -\log M(\rho)$, suppose that the optimal solution to SDP (\ref{prime hat W0}) of 
$W_0(\rho)$ is $R_0$. It is clear that $R_0$ is also a feasible solution to SDP (\ref{dual 2 M}) of $M(\rho)$. 
Thus, $W_0(\rho)= \|{R_0}^{T_B}\|_\infty\ge M(\rho)$.
Furthermore, ${W_0}(\rho^{\ox n})\ge M(\rho^{\ox n})= M(\rho)^n$.

Hence,
\begin{align*}
E_{0,D,PPT}(\rho) &=\lim_{n\to \infty}  -\frac{1}{n}\log {W_0} (\rho^{\otimes n})\\
&\le \lim_{n\to \infty}  -  \frac{1}{n}\log M(\rho)^n= E_M(\rho).
\end{align*}

Finally, to prove $E_M(\rho) \le E_W(\rho)$, suppose that the optimal solution to SDP (\ref{dual 2 M}) is $R$, then we have $R\ge P \ge 0$. Let $R'=R/\|R^{T_B}\|_\infty $ and it is easy to see the positivity of $R'$ and the fact that $|R'^{T_B}|\le \1$, which means that $R'$ is a feasible solution to SDP (\ref{prime WN}). Therefore, 
$E_W(\rho)\ge \log \tr \rho R'
 \ge  \log \tr \rho P/\|R^{T_B}\|_\infty  = -\log \|R^{T_B}\|_\infty  =E_M(\rho)$. 
\end{proof}
\begin{remark}
For any bipartite state $\rho$, if the support of $\rho$ contains a PPT state $\sigma$, then  $E_M(\rho)=0$ and we have that $E_{0,D,PPT}(\rho) =0$.
Thus $\rho$ is bound entanglement for exact distillation under both LOCC or PPT operations.
\end{remark}

We further show the estimation of Theorem \ref{estimation} in Fig.2 by a class of $3\otimes 3$ states in Ref. \cite{Wang2016} defined by 
$$\rho^{(\a)}=\frac{1}{3}\sum_{m=0}^{2} (X^\dagger \ox X)^m\proj{\psi_0} (X\ox X^\dagger )^m,$$
 where $\ket{\psi_0}=\sqrt{\a}\ket{00}+\sqrt{1-\a}\ket{11} (0<\a \le 0.5)$ and 
 $X=\sum_{j=0}^{2}\ketbra {j\oplus 1}{j}$. An interesting fact is that $E_M(\rho^{(\a)})$ is tight for $E_{0,D,PPT}(\rho^{(\a)})$ when $0<\a\le1/5$, which is proved in the following Proposition.
 \begin{figure}[ht]
\vspace{-0.3cm}
\includegraphics[width=0.32\textwidth]{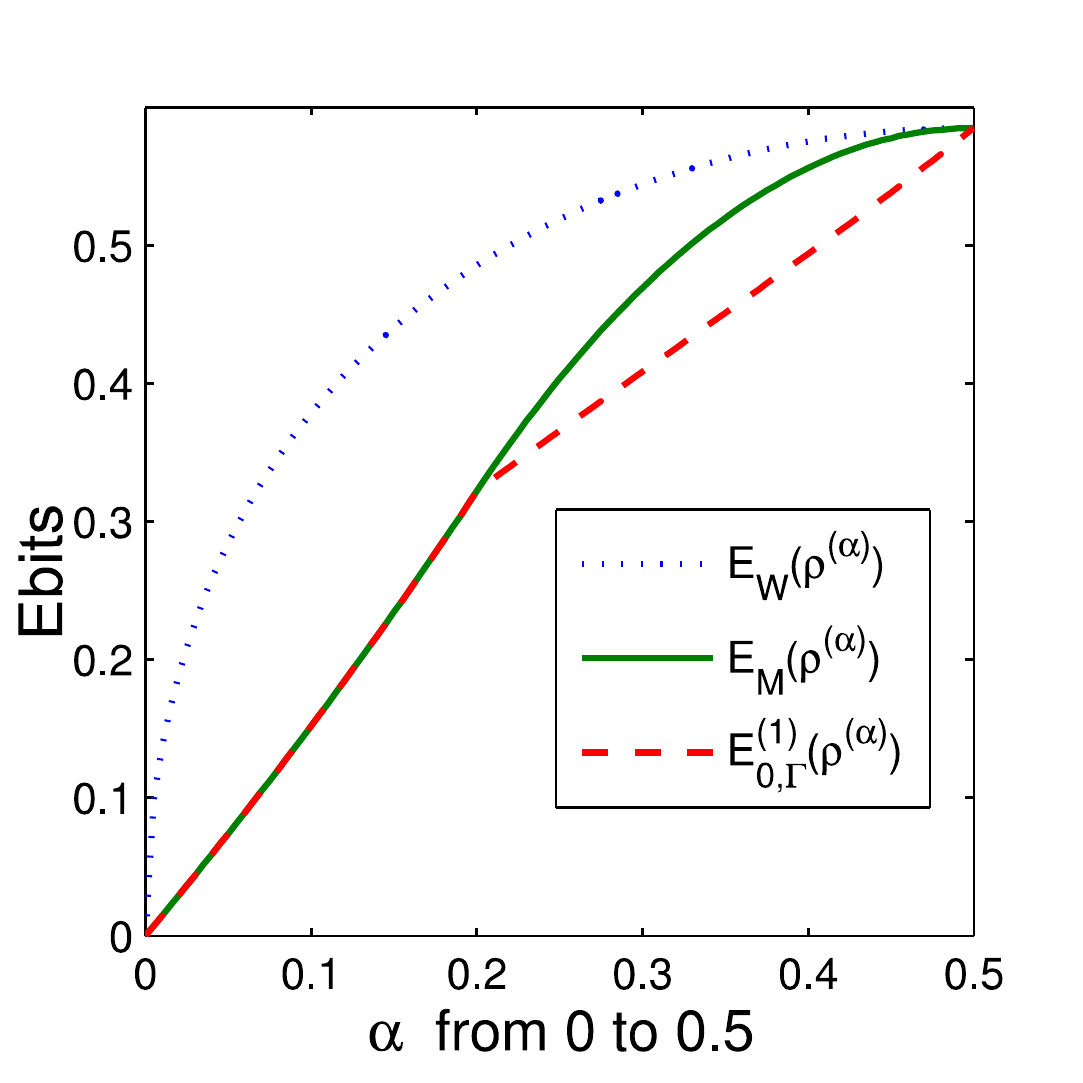}
 \caption{This plot presents the estimation of $E_{D,PPT}(\rho^{(\a)})$ and $E_{0,D,PPT}(\rho^{(\a)})$. The dot line depicts $E_W(\rho^{(\a)})$,  the dash line depicts $E_{0,D,PPT}^{(1)}(\rho^{(\a)})$ and the solid line depicts $E_M(\rho^{(\a)})$.}
\end{figure}

\begin{proposition}\label{tight case}
For any bipartite state $\rho$ with support projection $P$, suppose that the eigenvector $\ket{\psi}$ of $P^{T_B}$ with the eigenvalue $\|P^{T_B}\|_\infty$ is a product state, then 
$$E_{0,D,PPT}(\rho)=E_{M}(\rho)=-\log\|P^{T_B}\|_\infty\le E_{D,PPT}(\rho).$$
\end{proposition}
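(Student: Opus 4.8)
The plan is to prove the two equalities by a squeeze argument: I will bound $E_{0,D,PPT}(\rho)$ below by $-\log\|P^{T_B}\|_\infty$ and above by $E_M(\rho)$ (the upper bound being already supplied by Theorem \ref{estimation}), and then show that both of the outer quantities equal $-\log\|P^{T_B}\|_\infty$. The final inequality $-\log\|P^{T_B}\|_\infty\le E_{D,PPT}(\rho)$ will follow from the general relation between exact and asymptotic distillation.

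First I would evaluate $E_M(\rho)$ exactly using the dual SDP (\ref{dual 2 M}), $M(\rho)=\min\|R^{T_B}\|_\infty$ subject to $R\ge P$. The upper bound $M(\rho)\le\|P^{T_B}\|_\infty$ is immediate, since $R=P$ is feasible. For the matching lower bound I would invoke the product hypothesis: writing the top eigenvector as $\ket{\psi}=\ket{a}\ket{b}$, the rank-one operator $(\proj{\psi})^{T_B}=\proj{a}\ox\proj{b^*}$ is positive semidefinite, so for any feasible $R\ge P$ one has $\bra{\psi}R^{T_B}\ket{\psi}=\tr[R(\proj{\psi})^{T_B}]\ge\tr[P(\proj{\psi})^{T_B}]=\bra{\psi}P^{T_B}\ket{\psi}=\|P^{T_B}\|_\infty$, using $\tr[R(\proj{\psi})^{T_B}]=\tr[R^{T_B}\proj{\psi}]$. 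Since $\|R^{T_B}\|_\infty\ge\bra{\psi}R^{T_B}\ket{\psi}$, this forces $M(\rho)=\|P^{T_B}\|_\infty$, hence $E_M(\rho)=-\log\|P^{T_B}\|_\infty$.

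Next I would bound $E_{0,D,PPT}(\rho)$ below. The support projection of $\rho^{\ox n}$ is $P^{\ox n}$, and $R=P^{\ox n}$ satisfies $P^{\ox n}\le R\le\1$, so it is feasible for SDP (\ref{prime hat W0}) of $W_0(\rho^{\ox n})$; this gives $W_0(\rho^{\ox n})\le\|(P^{\ox n})^{T_B}\|_\infty=\|P^{T_B}\|_\infty^{n}$. Applying $-\tfrac1n\log$ and letting $n\to\infty$ yields $E_{0,D,PPT}(\rho)\ge-\log\|P^{T_B}\|_\infty$. Combining with Theorem \ref{estimation}, we obtain $-\log\|P^{T_B}\|_\infty\le E_{0,D,PPT}(\rho)\le E_M(\rho)=-\log\|P^{T_B}\|_\infty$, so all three coincide. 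The last inequality $-\log\|P^{T_B}\|_\infty\le E_{D,PPT}(\rho)$ follows because exact (zero-error) distillation is a special case of distillation with asymptotically vanishing error, whence $E_{0,D,PPT}(\rho)\le E_{D,PPT}(\rho)$.

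The main obstacle is the lower bound $M(\rho)\ge\|P^{T_B}\|_\infty$, which is the single place where the hypothesis enters. The step $\tr[R(\proj{\psi})^{T_B}]\ge\tr[P(\proj{\psi})^{T_B}]$ requires $(\proj{\psi})^{T_B}\ge0$, and this holds precisely because $\ket{\psi}$ is a product state; for an entangled top eigenvector the partial transpose $(\proj{\psi})^{T_B}$ is indefinite and the estimate breaks down, consistent with $E_M$ ceasing to be tight for the states $\rho^{(\alpha)}$ once $\alpha>1/5$.
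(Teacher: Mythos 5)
Your proposal is correct and is essentially the paper's own argument: both rest on the observation that the product hypothesis turns $(\proj{\psi})^{T_B}$ into a legitimate PPT certificate forcing $M(\rho)=\|P^{T_B}\|_\infty$, and then squeeze $E_{0,D,PPT}(\rho)$ between $-\log\|P^{T_B}\|_\infty$ and $E_M(\rho)$ via Theorem \ref{estimation}. The differences are cosmetic only: you run the certificate through the dual SDP (\ref{dual 2 M}) as a weak-duality bound, whereas the paper plugs $V=(\proj{\psi})^{T_B}$ into the primal (\ref{prime 2 M}) as a feasible point, and you reprove (via $R=P^{\ox n}$ in SDP (\ref{prime hat W0})) the lower bound $E_{0,D,PPT}(\rho)\ge-\log\|P^{T_B}\|_\infty$ that the paper simply cites from Ref. \cite{Wang2016}.
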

\begin{proof}
In Ref. \cite{Wang2016}, it shows that $E_{0,D,PPT}(\rho)\ge-\log\|P^{T_B}\|_\infty$. If $\proj{\psi}$ is PPT, then we can choose $V=\proj{\psi}$  and it is easy to see $V$ is a feasible solution to SDP (\ref{prime 2 M}) of $M(\rho)$. Thus, $E_M(\rho)\le -\log \tr P^{T_B}\proj{\psi}=-\log\|P^{T_B}\|_\infty$.
\end{proof}

For any pure state $\proj{\phi}$, 
suppose that $\ket{\phi}$ has the Schmidt decomposition $\ket \phi=\sum_{i=1}^{m}\lambda_i\ket{ii}$ with $\lambda_1^2\ge...\ge\lambda_m^2$ and $\sum_{i=1}^m\lambda_i^2=1$.
Then $\proj{\phi}^{T_B}=\sum_{i=1}^m\lambda_i^2\proj{ii}+\sum_{i\ne j}\lambda_i\lambda_j\ket{ji}\bra{ij}$. Thus, $\|P^{T_B}\|_\infty=\lambda_1^2$ and the corresponding eigenvector is $\proj{11}$. Hence, by Proposition \ref{tight case}, $E_{0,D,PPT}(\proj{\phi})=E_{M}(\proj{\phi})=-\log\|\proj{\phi}^{T_B}\|_\infty$. This rate can be achieved by LOCC \cite{Duan2005b}.

\begin{example}
For the $\rho^{(\a)}$, when $0<\a\le1/5$,
we have that $E_{0,D,PPT}(\rho^{(\a)} )=E_{M}(\rho^{(\a)} )=-\log (1-\a)$. 
Let $U=X^\dagger \ox X$, the projection onto $\supp(\rho^{(\a)})$ is $P_\a=\sum_{m=0}^{2} U^m\proj{\psi_0} (U^\dagger)^m$. Therefore, 
\begin{align*}
P_\a^{T_B}=&2\sqrt{\a(1-\a)}\proj{v_1}-\sqrt{\a(1-\a)}(\proj{v_2}+\proj{v_3}) \\
&+\sum_{m=0}^{2} U^m[(1-\a)\proj{11} +\a\proj{00}](U^\dagger)^m,
\end{align*}
 where $\ket{v_1}=\frac{1}{\sqrt3}(\ket{01}+\ket{10}+\ket{22})$,  $\ket{v_2}=\frac{1}{\sqrt6}\ket{01}+\frac{1}{\sqrt6}\ket{10}-\sqrt{\frac{2}{3}}\ket{22})$ and $\ket{v_3}=\frac{1}{\sqrt2}(\ket {01} -\ket{10})$.
When $0<\a\le1/5$, we always have $1-\a\ge 2\sqrt{\a(1-\a)}$. Therefore, $\|P_\a^{T_B}\|_\infty=1-\a$ and the corresponding eigenvector is $\proj{11}$. Applying Proposition \ref{tight case}, the proof is done.
\end{example}


\section{Conclusions and Discussions}
We show that the Rains' bound is neither additive nor equal to the asymptotic relative entropy of entanglement by explicitly constructing a special class of mixed two-qubit states. We also show an SDP lower bound $E_M$ for the asymptotic Rains' bound. These results solve two open problems in entanglement theory and provide an efficiently computable lower bound for the entanglement cost of general bipartite states. This bound also has desirable properties such as additivity under tensor product and vanishing if and only if the support of the given state contains some PPT state. We further show that $E_M$ is the best upper bound for the deterministic distillable entanglement, which also gives the PPT-assisted deterministic distillation rate in some conditions, including all the pure states and some classes of the mixed states. 

It is of great interest to determine whether the asymptotic Rains' bound and the PPT distillable entanglement always coincide. It would also be interesting to decide whether $E_{0,D,PPT}(\rho)= E_M(\rho)$ for any bipartite state $\rho$ and to study the relationship between $E_M$ and the newly established SDP lower bound of the PPT-assisted entanglement cost \cite{Wang2016d}.


We were grateful to A. Winter, M. Tomamichel and M. B. Plenio for helpful suggestions. This work was partly supported by the Australian Research Council (Grant Nos. DP120103776 and FT120100449).


\end{document}